\newtheorem{theorem}{Theorem}[section]
\newtheorem{lemma}[theorem]{Lemma}
\newtheorem{proposition}[theorem]{Proposition}
\theoremstyle{definition}
\newtheorem{definition}[theorem]{Definition}
\newtheorem{example}[theorem]{Example}
\theoremstyle{remark}
\numberwithin{equation}{section}
\begin{document}


  \author[1]{Alessandro De Gregorio}
  \affil[1]{Department of Mathematics, University of Bologna, e-mail: alessandro.de@studio.unibo.it}
  \title{On the set of optimal homeomorphisms for the natural pseudo-distance associated with the Lie group $S^1$.}
  \runningtitle{On the set of optimal homeomorphisms.}
  \abstract{If $\varphi$ and $\psi$ are two continuous real-valued functions defined on a compact topological space $X$ and $G$ is a subgroup of the group of all homeomorphisms of $X$ onto itself, the natural pseudo-distance $d_G(\varphi,\psi)$ is defined as the infimum of $\mathcal{L}(g)=\|\varphi-\psi \circ g \|_\infty$, as $g$ varies in $G$. 
  In this paper, we make a first step towards extending the study of this concept to the case of Lie groups, by assuming $X=G=S^1$. 
  In particular, we study the set of the optimal homeomorphisms for $d_G$, i.e. the elements $\rho_\alpha$ of $S^1$ such that $\mathcal{L}(\rho_\alpha)$ is equal to $d_G(\varphi,\psi)$. As our main results, we give conditions that a homeomorphism has to meet in order to be optimal, and we prove that the set of the optimal homeomorphisms is finite under suitable conditions.}
  \keywords{natural pseudo-distance; optimal homeomorphism; persistent topology; group action}
  \classification[MSC2010]{Primary 57S05, Secondary 55N99}
  \startpage{1}

\maketitle
\section{Introduction} 
In the last 25 years persistent homology has spread across the mathematical community as an algebraic topological method for the study of spaces of real functions and their topological properties. Persistent homology is an extension of classical homology,
revisited from a point of view inspired by Morse theory and the concept of ``persistence'' of homological features in presence of noise. It has proved itself a valuable tool also for the application to other
sciences, as a way to analyze data or shapes, due to its robustness to noise and other artifacts. The reader can find a presentation of this theory in \cite{edel0}.  

After fixing a continuous real-valued function $\varphi$, from now on called \emph{filtering function}, we can consider the sublevel sets of $\varphi$ and their homology groups. By taking track of birth and death of the homology classes, due to the 
topological changes of the sublevel sets, it is constructed a multiset called \emph{persistence diagram} that outlines some of the topological properties of the considered function. Details related to this process are presented in the survey \cite{edel1}.  

One of the tools available to study persistent homology is  the concept of \emph{natural pseudo-distance}, a dissimilarity measure between filtering functions. Given a compact topological space $X$, and two continuous functions $\varphi,\psi:X \rightarrow \mathbb{R}$ together with a subgroup $G$ of the group $H$ of all homeomorphisms from $X$ onto $X$, the natural pseudo-distance associated with $G$ between $\varphi$ and $\psi$ is defined as 
\begin{equation}
\label{npd}
d_{G}(\varphi, \psi)= \inf_{g\in G}\max_{P \in X}|\varphi(P) - \psi (g(P)) |.
\end{equation}
The paper \cite{prof4} studies the natural pseudo-distance in the case that $G=H$ and $X$ is a closed manifold. If $G=H$ and $X$ is a closed curve, the natural pseudo-distance is strictly related to the Fr\'echet distance \cite{alt}. We refer the reader to \cite{prof3} for more details concerning the case of surfaces.

The main link between the natural pseudo-distance and persistent homology is a \emph{stability theorem}, which guarantees that the natural pseudo-distance is an upper-bound for the \emph{matching distance} between persistence diagrams. A study of this relation can be found in \cite{prof5}.

In this article we will confine ourselves to study functions defined on the Lie group $S^1$, when the group of homeomorphisms acting on these functions is the group of rotation $S^1$ itself. This is justified by several reasons, one of them being the interest for the study of the persistent homology of curves.  
For example, the length theorem for curves, a generalization of the Fary's theorem, can be proved by means of the stability theorem in persistent homology \cite{edel2}. We underline that the problem of analyzing 
dissimilarities between closed curves has been object of study from the point of view of persistent homology (cf., e.g., \cite{carlsson1} and \cite{prof1}) not only for the sake of the 
interest in these themes, but also to enlighten some of the aspects relevant to the generalization to higher dimensional cases. Several results
have been developed to estimate the natural pseudo-distance in the case the considered space is a closed curve (cf., e.g., \cite{difabio1} and 
\cite{prof2}). Closed curves attract interest also from an applicative point of view, since they model in a natural way the problem of analyzing waveform signals and periodic functions in one variable \cite{hahn1}. In this kind of problems the choice of the group of homeomorphisms plays a fundamental role, and some proper subgroup of the group of all homeomorphisms, as in this case the group of rotation $S^1$, is more suitable to represent significant aspects of these matters. 
Last but not least, the study of the natural pseudo-distance associated with the Lie group $S^1$ aims to be a first step towards the extension of persistent homology to Lie group theory.

An aspect related to the natural pseudo-distance is that of \emph{optimality}. With the need to estimate the dissimilarities between filtering 
functions comes also the need to find the homeomorphisms corresponding to the best matchings between the two functions, called \emph{optimal homeomorphisms}. 
The concept of optimality accomplishes a relevant role from both a theoretical and an applicative point of view, enabling us to understand whether or not there
may exist a point-to-point correspondence between two sets that is more significant than any other correspondence. It is therefore important to 
identify the cases where the number of such homeomorphisms is finite. In these conditions the considered correspondences reflect determinant 
properties of the functions under analysis and, from a computational point of view, it is of great use to have only a finite number of cases to consider.
The concept of optimality is not new in literature concerning persistence homology. In \cite{prof4} it has been shown that the existence of optimal homeomorphisms is not always granted. 
The research carried out in \cite{fasy1} improves the results in \cite{edel2}, paying attention to the existence of the homeomorphisms that realize the Fr\'echet distance between two curves.
In \cite{cerri1} the authors investigate the problem of optimality when the considered space is a closed curve, showing that if the two filtering functions are Morse and the natural pseudo-distance between them is $0$, then there exists at least one optimal diffeomorphism.

The main contribution of this paper concerns the study of the properties of optimal homeomorphisms, the necessary conditions that they have to satisfy, and the proof that the set of optimal homeomorphisms is finite, under the assumption that the filtering functions on $S^1$ are Morse and the invariance group is $S^1$ itself.\\
    
The outline of the paper is as follows. In Section \ref{sec2} we recall the main definitions and analyze the properties that have to be met by a homeomorphism in order to be optimal. In Section \ref{sec3} we prove the main theorems of our paper, which state some differential conditions that are necessary for optimality, and the finiteness of the set of optimal homeomorphisms, provided that  the filtering functions are Morse.

\section{Natural pseudo-distance and optimality}
\label{sec2}
Let us consider the Lie group $S^1$. In order to simplify our calculations we will see the manifold $S^1$ as the quotient space 
$\mathbb{R}/\!\!\!\sim$, were $x \sim y$ if and only if $x-y$ is a multiple of $2\pi$.
Every element $\alpha \in S^1$ can also be identified with the homeomorphism $\rho_\alpha: S^1 \rightarrow S^1$ defined by setting 
$\rho_\alpha(\theta)=\theta+\alpha$.  We are interested in studying the space $\mathcal{M}$ of all Morse functions from $S^1$ to $\mathbb{R}$.
If in (\ref{npd}) we set $G=S^1$ we get the following \emph{natural pseudo-distance} on $\mathcal{M}$:

\begin{equation}
d_{S^1}(\varphi,\psi) = \inf_{\rho_\alpha \in S^1}|| \varphi - \psi \circ \rho_\alpha||_\infty .
\end{equation}

Our paper focuses on the concept expressed by the following definition.
\begin{definition}
We say that $\rho_{\bar{\alpha}}$ is an \emph{optimal} homeomorphism between $\varphi$ and $\psi$ if 
\begin{equation}
 d_{S^1}(\varphi,\psi) = \max_{\theta \in S^1}|\varphi(\theta) - \psi(\rho_{\bar{\alpha}}(\theta)) | .
\end{equation}
\end{definition}

Given that $S^1$ is a compact group with respect to the usual topology, it is trivial to observe that at least one optimal homeomorphism between $\varphi$ and $\psi$
always exists in $S^1$.

For the sake of simplicity we define the following functions that we will often use in this section:
\begin{definition}
Given $\varphi , \psi \in \mathcal{M}$ we will set
\begin{gather*}
F: S^1 \times S^1 \rightarrow \mathbb{R} \\
F(\theta,\alpha) = |\varphi(\theta)-\psi(\theta+\alpha)|
\end{gather*} and
\begin{gather*}
f_\alpha : S^1  \rightarrow \mathbb{R} \qquad \text{ for each fixed } \alpha \in S^1  \\
f_\alpha(\theta) = F(\theta,\alpha).
\end{gather*}
If $\rho_{\bar{\alpha}}$ is an optimal homeomorphism, we say that a  point $(\bar{\theta},\bar{\alpha}) \in S^1 \times S^1$ \emph{realizes the natural pseudo-distance} if 
\begin{equation*}
d_{S^1}(\varphi,\psi) = F(\bar{\theta},\bar{\alpha}).
\end{equation*}
\end{definition}

We observe that our functions $F$ and $f_\alpha$ (for all $\alpha$ in $S^1$) are uniformly continuous functions because of the compactness of their domain. 
Moreover, at the points $(\theta,\alpha)$ where $F(\theta,\alpha) \neq 0$, $F$ is at least a $C^1$ function and the same holds for 
$f_\alpha$ at the points $\theta$ where $f_\alpha (\theta) \neq 0$. In this section we will confine ourselves to consider the case in which 
the natural pseudo-distance is strictly greater than 0; in such a hypothesis the 
function $F$ is $C^1$ at the points that realize the natural pseudo-distance. The case in which $d_{S^1}(\varphi,\psi)=0$ will be examined in the theorems in the next section. To simplify our notation we will write $d_{S^1}(\varphi,\psi)=\bar{d}$.

\begin{lemma}
\label{lem1}
Let $\rho_{\bar{\alpha}}$ be an optimal homeomorphism and $\bar{\alpha}$ be the element of $S^1$ associated with it. Then there exist two points $\bar{\theta}_1$ and $\bar{\theta}_2$ of $S^1$ and two sequences $(\theta_i^u,\alpha_i^u)$, $(\theta_i^d,\alpha_i^d)$ in $S^1 \times S^1$ with the following properties:
\begin{gather*}
\alpha_i^u \xrightarrow{i \rightarrow \infty} \bar{\alpha} \qquad \alpha_i^u > \bar{\alpha} \\
\theta_i^u \xrightarrow{i \rightarrow \infty} \bar{\theta}_1 \text{ with } \\
f_{\alpha_i^u}(\theta_i^u) = \max_{\theta \in S^1}f_{\alpha_i^u}(\theta) \qquad
f_{\bar{\alpha}}(\bar{\theta}_1) = \max_{\theta \in S^1}f_{\bar{\alpha}}(\theta) 
\end{gather*}  
and
\begin{gather*}
\alpha_i^d \xrightarrow{i \rightarrow \infty} \bar{\alpha} \qquad \alpha_i^d < \bar{\alpha} \\
\theta_i^d \xrightarrow{i \rightarrow \infty} \bar{\theta}_2 \text{ with } \\
f_{\alpha_i^d}(\theta_i^d) = \max_{\theta \in S^1}f_{\alpha_i^d}(\theta) \qquad
f_{\bar{\alpha}}(\bar{\theta}_2) = \max_{\theta \in S^1}f_{\bar{\alpha}}(\theta) .
\end{gather*}  
\end{lemma}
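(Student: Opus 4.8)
The plan is to reduce everything to the continuity of the ``max function'' $M\colon S^1\to\mathbb{R}$ defined by $M(\alpha)=\max_{\theta\in S^1}f_\alpha(\theta)=\max_{\theta\in S^1}F(\theta,\alpha)$. Note that $M(\alpha)=\|\varphi-\psi\circ\rho_\alpha\|_\infty$, so $\bar d=d_{S^1}(\varphi,\psi)=\min_{\alpha}M(\alpha)$ (a minimum, by compactness of $S^1$ and continuity of $M$), and optimality of $\rho_{\bar\alpha}$ means precisely $M(\bar\alpha)=\bar d$. The maximizers of $f_{\bar\alpha}$ are exactly the points where this common value $\bar d$ is attained. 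The two requested sequences will be produced by approaching $\bar\alpha$ from the right and from the left and choosing, for each nearby parameter $\alpha$, a point where $f_\alpha$ attains its maximum.

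First I would establish that $M$ is continuous. Since $S^1\times S^1$ is compact, $F$ is uniformly continuous, so for every $\varepsilon>0$ there is $\delta>0$ with $|F(\theta,\alpha)-F(\theta,\alpha')|<\varepsilon$ for all $\theta$ whenever $|\alpha-\alpha'|<\delta$; taking the maximum over $\theta$ on both sides gives $|M(\alpha)-M(\alpha')|\le\varepsilon$, which is the continuity of $M$.

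Next I would construct the ``up'' sequence. Fix any sequence $\alpha_i^u\to\bar\alpha$ with $\alpha_i^u>\bar\alpha$. For each $i$, the compactness of $S^1$ and the continuity of $f_{\alpha_i^u}$ guarantee a point $\theta_i^u$ with $f_{\alpha_i^u}(\theta_i^u)=M(\alpha_i^u)$. The sequence $(\theta_i^u)$ lies in the compact space $S^1$, so after passing to a subsequence (and relabelling both $\theta_i^u$ and the corresponding $\alpha_i^u$, which keeps $\alpha_i^u\to\bar\alpha$ from above) we may assume $\theta_i^u\to\bar\theta_1$ for some $\bar\theta_1\in S^1$. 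Passing to the limit and using the continuity of $F$ and of $M$,
\begin{equation*}
f_{\bar\alpha}(\bar\theta_1)=F(\bar\theta_1,\bar\alpha)=\lim_{i\to\infty}F(\theta_i^u,\alpha_i^u)=\lim_{i\to\infty}M(\alpha_i^u)=M(\bar\alpha),
\end{equation*}
so $\bar\theta_1$ is a maximizer of $f_{\bar\alpha}$, as required. Repeating the argument verbatim with a sequence $\alpha_i^d\to\bar\alpha$, $\alpha_i^d<\bar\alpha$, produces the ``down'' sequence together with its limit $\bar\theta_2$.

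The only real content is the simultaneous limit passage $f_{\alpha_i^u}(\theta_i^u)\to f_{\bar\alpha}(\bar\theta_1)$ and $M(\alpha_i^u)\to M(\bar\alpha)$: the first follows from joint continuity of $F$ applied to the convergent pair $(\theta_i^u,\alpha_i^u)\to(\bar\theta_1,\bar\alpha)$, and the second is exactly the continuity of $M$ proved above. I expect the continuity of $M$ to be the main (though routine) step, since it is what forces the limit of the maximal values to equal the maximal value at $\bar\alpha$; without it one could only conclude $f_{\bar\alpha}(\bar\theta_1)\le M(\bar\alpha)$. I note that optimality of $\bar\alpha$ is not actually used in the construction---it enters only to identify $M(\bar\alpha)$ with $\bar d$---so the statement in fact holds for every $\bar\alpha$, which is convenient when these one-sided sequences are later fed into a left/right derivative analysis of $M$ at its minimum.
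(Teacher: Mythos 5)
Your proof is correct, and its skeleton coincides with the paper's: fix an arbitrary sequence $\alpha_i^u\to\bar\alpha$ from above, pick a global maximizer $\theta_i^u$ of each $f_{\alpha_i^u}$ by compactness, and extract a convergent subsequence $\theta_i^u\to\bar\theta_1$. The only divergence is in the last step, where one must show that the limit $\bar\theta_1$ is itself a global maximizer of $f_{\bar\alpha}$. The paper argues by contradiction: if $\bar\theta_1$ were not a maximizer, a true maximizer $\tilde\theta$ of $f_{\bar\alpha}$ would, by (uniform) continuity of $F$, have a neighborhood on which $F$ strictly exceeds its values on a neighborhood of $(\bar\theta_1,\bar\alpha)$, so for large $i$ the points $\theta_i^u$ could not have been maximizers of $f_{\alpha_i^u}$. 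You instead isolate the continuity of the max function $M(\alpha)=\max_\theta F(\theta,\alpha)$ (proved from uniform continuity of $F$) and conclude directly from $f_{\bar\alpha}(\bar\theta_1)=\lim_i F(\theta_i^u,\alpha_i^u)=\lim_i M(\alpha_i^u)=M(\bar\alpha)$. Both arguments rest on the same ingredients (compactness and uniform continuity), but your decomposition is cleaner: it avoids the contradiction and the somewhat informal neighborhood comparison in the paper, and the lemma on $M$ is reusable elsewhere (e.g., it immediately gives that the infimum defining $d_{S^1}$ is attained). Your closing remark is also accurate and worth keeping: optimality of $\rho_{\bar\alpha}$ is never used in either proof, so the conclusion holds for every $\bar\alpha\in S^1$; the hypothesis in the statement merely reflects how the lemma is applied later.
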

\begin{proof}
We will prove the existence of the first sequence as for the other one the same arguments can be used. We can take a sequence 
$(\alpha_i^u)$ such that $\lim_{i \rightarrow \infty}\alpha_i^u=\bar{\alpha}$ and $\alpha_i^u > \bar{\alpha}$ for every index $i$. Since every function $f_\alpha$ is defined on $S^1$, 
and therefore has a compact domain, for each $i$ it is possible to choose a point $\theta_i^u$ of global maximum for the function $f_{\alpha_i^u}$. 
The resulting sequence $(\theta_i^u)$ is defined in a compact space, hence we can take a convergent subsequence $(\theta_{i_j}^u)$ with limit 
a point $\bar{\theta}_1$. Associating with each $(\theta_{i_j}^u)$ the corresponding $\alpha_{i_j}^u$ we obtain a sequence $(\theta_{i_j}^u,\alpha_{i_j}^u)$ 
that converges to $(\bar{\theta}_1,\bar{\alpha})$. It remains to prove that $\bar{\theta}_1$ is a point of global maximum for the function $f_{\bar{\alpha}}$. 
By contradiction suppose that $\bar{\theta}_1$ is not a point of global maximum. The function $f_{\bar{\alpha}}$ must have at least one point of absolute maximum, 
let us call it $\tilde{\theta}$. Then, since $F$ is a uniformly continuous function, there must exist a neighborhood $U$ of $(\tilde{\theta},\bar{\alpha})$ 
and a neighborhood $W$ of $(\bar{\theta}_1,\bar{\alpha})$ such that on $U$ the function $F$ takes values strictly greater than the ones taken on $W$. 
Then for some of the $f_{\alpha_{i_j}^u}$ we can find some points $\tilde{\theta}_{i_j}^u$ such that $f_{\alpha_{i_j}^u}(\tilde{\theta}_{i_j}^u)>f_{\alpha_{i_j}^u}(\theta_{i_j}^u)$ 
and that is absurd for how we chose the sequence $(\theta_{i_j}^u)$.
\end{proof}

We can now give conditions that the points $(\bar{\theta},\bar{\alpha})$ realizing the natural pseudo-distance have to satisfy.

\begin{theorem}
\label{p1}
Let us take an optimal homeomorphism $\rho_{\bar{\alpha}}$ and assume that the function $f_{\bar{\alpha}}$ has only one point of absolute maximum $\bar{\theta}$, i.e. only one point $\bar{\theta}$ in $S^1$ such that $f_{\bar{\alpha}}(\bar{\theta})=d_{S^1}(\varphi,\psi)$. Then $(\bar{\theta},\bar{\alpha})$ is a critical point for the function $F$:
\begin{equation}
\frac{\partial F}{\partial \theta}(\bar{\theta},\bar{\alpha})= 0 \qquad \text{and} \qquad \frac{\partial F}{\partial \alpha}(\bar{\theta},\bar{\alpha})= 0.
\end{equation}
\end{theorem}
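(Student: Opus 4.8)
The plan is to treat the two partial derivatives separately, disposing of the $\theta$-derivative immediately and devoting the main effort to the $\alpha$-derivative.

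For $\frac{\partial F}{\partial \theta}(\bar{\theta},\bar{\alpha})$, I would observe that by hypothesis $\bar{\theta}$ is a global maximum of the one-variable function $f_{\bar{\alpha}}(\theta)=F(\theta,\bar{\alpha})$ on the boundaryless manifold $S^1$, and that since $\bar{d}=f_{\bar{\alpha}}(\bar{\theta})>0$ the function $f_{\bar{\alpha}}$ is $C^1$ in a neighborhood of $\bar{\theta}$. By the interior first-derivative test it follows at once that $f_{\bar{\alpha}}'(\bar{\theta})=\frac{\partial F}{\partial \theta}(\bar{\theta},\bar{\alpha})=0$, so no further work is needed in this direction.

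For $\frac{\partial F}{\partial \alpha}(\bar{\theta},\bar{\alpha})$, the key is that optimality makes $\bar{\alpha}$ a global minimum of $M(\alpha):=\max_{\theta \in S^1}F(\theta,\alpha)$, while $M$ itself need not be differentiable; so I would extract the one-sided information by approaching $\bar{\alpha}$ from both sides using Lemma \ref{lem1}. Since $f_{\bar{\alpha}}$ is assumed to have the unique maximizer $\bar{\theta}$, the limit points $\bar{\theta}_1$ and $\bar{\theta}_2$ produced by the lemma both coincide with $\bar{\theta}$. For the sequence $(\theta_i^u,\alpha_i^u)$ with $\alpha_i^u>\bar{\alpha}$, minimality of $M$ at $\bar{\alpha}$ gives $F(\theta_i^u,\alpha_i^u)=M(\alpha_i^u)\geq \bar{d}=F(\bar{\theta},\bar{\alpha})$, while maximality of $\bar{\theta}$ for $f_{\bar{\alpha}}$ gives $F(\theta_i^u,\bar{\alpha})\leq \bar{d}$. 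Writing
\[
0 \leq F(\theta_i^u,\alpha_i^u)-F(\bar{\theta},\bar{\alpha}) \leq F(\theta_i^u,\alpha_i^u)-F(\theta_i^u,\bar{\alpha}),
\]
dividing by $\alpha_i^u-\bar{\alpha}>0$ and applying the mean value theorem in the $\alpha$-variable, I obtain $0 \leq \frac{\partial F}{\partial \alpha}(\theta_i^u,\xi_i^u)$ for some $\xi_i^u\in(\bar{\alpha},\alpha_i^u)$. Letting $i\to\infty$ and using the continuity of $\frac{\partial F}{\partial \alpha}$ near $(\bar{\theta},\bar{\alpha})$ (again guaranteed by $\bar{d}>0$), this yields $\frac{\partial F}{\partial \alpha}(\bar{\theta},\bar{\alpha})\geq 0$.

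Running the identical computation on the sequence $(\theta_i^d,\alpha_i^d)$ with $\alpha_i^d<\bar{\alpha}$, the division by the now-negative increment $\alpha_i^d-\bar{\alpha}$ reverses the inequality and produces $\frac{\partial F}{\partial \alpha}(\bar{\theta},\bar{\alpha})\leq 0$; combining the two bounds forces $\frac{\partial F}{\partial \alpha}(\bar{\theta},\bar{\alpha})=0$. I expect the main obstacle to be precisely this $\alpha$-direction: because $M$ is only a pointwise maximum of a family of smooth functions it is generally nondifferentiable at $\bar{\alpha}$, so any naive attempt to differentiate $M$ directly fails. The uniqueness hypothesis on the maximizer, together with the two-sided squeezing supplied by Lemma \ref{lem1}, is exactly what localizes the one-sided difference quotients at the single point $\bar{\theta}$ and lets me recover the vanishing of the genuine partial derivative of $F$.
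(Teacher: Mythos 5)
Your proof is correct and takes essentially the same route as the paper's: the $\theta$-derivative vanishes because $\bar{\theta}$ is an interior global maximum of $f_{\bar{\alpha}}$, and the $\alpha$-derivative is squeezed to zero using the two one-sided sequences from Lemma \ref{lem1} together with the inequalities $F(\theta_i,\alpha_i)\geq F(\bar{\theta},\bar{\alpha})\geq F(\theta_i,\bar{\alpha})$. The only difference is cosmetic: you justify the convergence of the difference quotients to $\frac{\partial F}{\partial \alpha}(\bar{\theta},\bar{\alpha})$ explicitly via the mean value theorem and continuity of the partial derivative, a step the paper asserts without detail.
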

\begin{proof}
At first we see that the function $F$ is  at least $C^1$ in a neighborhood of $(\bar{\theta},\bar{\alpha})$, because otherwise we would have $F(\bar{\theta},\bar{\alpha})=0$ and then $f_{\bar{\alpha}}$ would be the constant function equal to $0$. This is absurd since $f_{\bar{\alpha}}$ has only one point of absolute maximum.
We immediately see that it must be  $\frac{\partial F}{\partial \theta}(\bar{\theta},\bar{\alpha})= \frac{d}{d\theta}f_{\bar{\alpha}}(\bar{\theta})= 0$, since $\bar{\theta}$ is a point of global maximum for the function $f_{\bar{\alpha}}$. Now, let us consider the two sequences $(\theta_i^u,\alpha_i^u)$, $(\theta_i^d,\alpha_i^d)$ in $S^1 \times S^1$, whose existence is guaranteed by Lemma \ref{lem1}. Since $\bar{\theta}$ is the only point of absolute maximum for the function $f_{\bar{\alpha}}$ it must be
\begin{align*}
(\theta_i^u,\alpha_i^u)& \xrightarrow{i \rightarrow \infty} (\bar{\theta},\bar{\alpha}) \quad \text{and} \\
(\theta_i^d,\alpha_i^d)& \xrightarrow{i \rightarrow \infty} (\bar{\theta},\bar{\alpha}). 
\end{align*}
Therefore, since for all $i$ we have $F(\theta_i^u,\alpha_i^u) \geq F(\bar{\theta},\bar{\alpha}) \geq F(\theta_i^u,\bar{\alpha})$ and similarly $F(\theta_i^d,\alpha_i^d) \geq F(\bar{\theta},\bar{\alpha})\geq F(\theta_i^d,\bar{\alpha})$, we see that

\begin{align*}
\lim_{i \rightarrow \infty}\frac{F(\theta_i^u,\alpha_i^u) -F(\theta_i^u,\bar{\alpha})}{\alpha_i^u- \bar{\alpha}} \geq 0 \\
\lim_{i \rightarrow \infty}\frac{F(\theta_i^d,\alpha_i^d) -F(\theta_i^d,\bar{\alpha})}{\alpha_i^d- \bar{\alpha}} \leq 0 
\end{align*}
and since both the limits converge to $\frac{\partial F}{\partial \alpha}(\bar{\theta},\bar{\alpha})$ it must be $\frac{\partial F}{\partial \alpha}(\bar{\theta},\bar{\alpha})= 0$.
\end{proof}

\begin{example}
As an example illustrating the statement of Theorem \ref{p1} we can consider the two functions $\varphi(\theta)=\frac{1}{2}\sin^2(\frac{\theta}{2})$ and $\psi(\theta)=\sin^2(\frac{\theta}{2})$. In this case the natural pseudo-distance is $\frac{1}{2}$ and it is realized by the point $(0,\pi)$, which is a critical point for $F(\theta,\alpha)$.
\end{example}
We can also find examples where $\rho_\alpha$ is  an optimal homeomorphism and the points $(\theta, \alpha)$ that realize the natural pseudo-distance are not critical points for $F$.
\begin{example}
\label{e2}
Consider the two functions $\varphi$ and $\psi$ drawn in Figure \ref{fig1}. We see that the optimal homeomorphism is the identity and that the points that realize
the natural pseudo-distance are $(\frac{\pi}{2},0)$ and $(\frac{3\pi}{2},0)$, which are not critical points for the function $F$.
\begin{figure}
\centering
\includegraphics[scale =0.5]{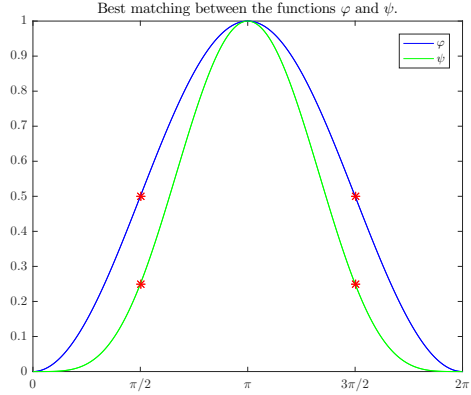}
\caption{Best matching between two Morse functions. The red dots represent the points at which the maximum pointwise distance between the two functions
is attained. Their abscissas are $\frac{\pi}{2}$ and $\frac{3\pi}{2}$.}
\label{fig1}
\end{figure}
\end{example}
The next proposition illustrates the possible position of the points that realize the natural pseudo-distance.
\begin{proposition}
\label{p2}
Let us assume that $\bar{d} \neq 0$ and take an optimal homeomorphism $\rho_{\bar{\alpha}}$. We define $\Theta = \{ \theta \in S^1  |  f_{\bar{\alpha}}(\theta):=F(\theta,\bar{\alpha})=\bar{d} \}$ and suppose that none of the points $(\theta,\bar{\alpha}), \theta \in \Theta$ is a critical point for $F$. Then:
\begin{equation}
\exists \bar{\theta}_1,\bar{\theta}_2 \in \Theta \text{ such that } \frac{\partial F}{\partial \alpha}(\bar{\theta}_1,\bar{\alpha})\frac{\partial F}{\partial \alpha}(\bar{\theta}_2,\bar{\alpha})<0.
\end{equation}
\end{proposition}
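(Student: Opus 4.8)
The plan is to mimic the argument used in the proof of Theorem \ref{p1}, but now exploiting the two distinct limit points supplied by Lemma \ref{lem1} instead of relying on a unique maximizer. The starting observation is that optimality of $\rho_{\bar\alpha}$ means precisely that $\bar\alpha$ minimizes the function $\alpha \mapsto \max_{\theta\in S^1} f_\alpha(\theta)$; in particular $\max_{\theta} f_\alpha(\theta) \geq \bar d$ for every $\alpha\in S^1$, with equality at $\bar\alpha$. First I would apply Lemma \ref{lem1} to obtain the points $\bar\theta_1,\bar\theta_2$ and the sequences $(\theta_i^u,\alpha_i^u)$ (with $\alpha_i^u>\bar\alpha$) and $(\theta_i^d,\alpha_i^d)$ (with $\alpha_i^d<\bar\alpha$), recalling that $\bar\theta_1,\bar\theta_2$ are global maxima of $f_{\bar\alpha}$, hence $\bar\theta_1,\bar\theta_2\in\Theta$.

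Next I would extract the sign information from the difference quotients. For the upper sequence, since $\theta_i^u$ maximizes $f_{\alpha_i^u}$ we have $F(\theta_i^u,\alpha_i^u)=\max_\theta f_{\alpha_i^u}(\theta)\geq \bar d\geq F(\theta_i^u,\bar\alpha)$, the last inequality because $\bar d=\max_\theta f_{\bar\alpha}(\theta)$. As $\alpha_i^u-\bar\alpha>0$ this gives
\begin{equation*}
\frac{F(\theta_i^u,\alpha_i^u)-F(\theta_i^u,\bar\alpha)}{\alpha_i^u-\bar\alpha}\geq 0,
\end{equation*}
and the analogous computation for the lower sequence, where now $\alpha_i^d-\bar\alpha<0$, yields a nonpositive quotient. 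Because $\bar d\neq 0$, the function $F$ is $C^1$ in neighborhoods of $(\bar\theta_1,\bar\alpha)$ and $(\bar\theta_2,\bar\alpha)$, so a mean-value argument together with the convergence $\theta_i^u\to\bar\theta_1$, $\alpha_i^u\to\bar\alpha$ (resp. $\theta_i^d\to\bar\theta_2$, $\alpha_i^d\to\bar\alpha$) lets me pass to the limit and conclude
\begin{equation*}
\frac{\partial F}{\partial\alpha}(\bar\theta_1,\bar\alpha)\geq 0 \qquad\text{and}\qquad \frac{\partial F}{\partial\alpha}(\bar\theta_2,\bar\alpha)\leq 0.
\end{equation*}

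Finally I would upgrade these weak inequalities to strict ones using the hypothesis that no point $(\theta,\bar\alpha)$ with $\theta\in\Theta$ is critical for $F$. Since $\bar\theta_1,\bar\theta_2$ are global maxima of $f_{\bar\alpha}$, we automatically have $\frac{\partial F}{\partial\theta}(\bar\theta_j,\bar\alpha)=\frac{d}{d\theta}f_{\bar\alpha}(\bar\theta_j)=0$; non-criticality then forces $\frac{\partial F}{\partial\alpha}(\bar\theta_j,\bar\alpha)\neq 0$ for $j=1,2$. Combining with the signs just obtained gives $\frac{\partial F}{\partial\alpha}(\bar\theta_1,\bar\alpha)>0$ and $\frac{\partial F}{\partial\alpha}(\bar\theta_2,\bar\alpha)<0$, whence the product is negative; note this also shows $\bar\theta_1\neq\bar\theta_2$, as it must be. I expect the only delicate point to be the passage to the limit in the difference quotients: one must justify that the quotient converges to the derivative evaluated at the limit point $\bar\theta_j$ rather than along the moving base points $\theta_i$, which is exactly where the $C^1$-regularity guaranteed by $\bar d\neq 0$ is needed.
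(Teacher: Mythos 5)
Your proposal is correct and follows essentially the same route as the paper: both use the two sequences from Lemma \ref{lem1}, derive the sign of $\frac{\partial F}{\partial\alpha}$ at $(\bar\theta_1,\bar\alpha)$ and $(\bar\theta_2,\bar\alpha)$ from the difference quotients, and then invoke the non-criticality hypothesis (together with $\frac{\partial F}{\partial\theta}=0$ at maxima of $f_{\bar\alpha}$) to make the inequalities strict. Your explicit mean-value justification of the passage to the limit is a detail the paper leaves implicit, but it is the same argument.
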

\begin{proof}
Since $\bar{d}\neq 0$, the function $F$ is at least $C^1$ in a neighborhood of the set $\Theta$.
We can consider the sequences defined in Lemma \ref{lem1}. We have that $(\theta_i^u,\alpha_i^u) \xrightarrow{i \rightarrow \infty} (\bar{\theta}_1,\bar{\alpha})$. 
Then since $F(\theta_i^u,\alpha_i^u) \geq F(\bar{\theta}_1,\bar{\alpha}) \geq F(\theta_i^u,\bar{\alpha})$,
\begin{equation}
\frac{\partial F}{\partial \alpha}(\bar{\theta}_1,\bar{\alpha}) =
\lim_{i\rightarrow \infty}\frac{F(\theta_i^u,\alpha_i^u) -F(\theta_i^u,\bar{\alpha})}{\alpha_i^u- \bar{\alpha}} \geq 0.
\end{equation}
This limit cannot be equal to $0$ because otherwise the point $(\bar{\theta}_1,\bar{\alpha})$ would be a critical point for the
function $F$, against our assumptions. Similarly, we have that  $F(\theta_i^d,\alpha_i^d) \geq F(\bar{\theta}_2,\bar{\alpha}) \geq F(\theta_i^d,\bar{\alpha})$,
\begin{gather*}
(\theta_i^d,\alpha_i^d) \xrightarrow{i \rightarrow \infty} (\bar{\theta}_2,\bar{\alpha}) \\
\frac{\partial F}{\partial \alpha}(\bar{\theta}_2,\bar{\alpha}) =
\lim_{i\rightarrow \infty}\frac{F(\theta_i^d,\alpha_i^d) -F(\theta_i^d,\bar{\alpha})}{\alpha_i^d- \bar{\alpha}} \leq 0.
\end{gather*}
Also in this case the limit cannot be equal to 0 because of our hypothesis, so that $\bar{\theta}_1 \neq \bar{\theta}_2$.
Therefore, $\frac{\partial F}{\partial \alpha}(\bar{\theta}_1,\bar{\alpha})\frac{\partial F}{\partial \alpha}(\bar{\theta}_2,\bar{\alpha})<0$.
\end{proof}
This proposition concerns cases similar to the one treated in the Example \ref{e2}.  
Let us consider the optimal homeomorphism $\rho_0 = id$, associated with $\bar{\alpha} = 0$. If we take $\rho_\varepsilon$, with an arbitrary small $\varepsilon>0$, then in a neighborhood of the point $\frac{\pi}{2}$ the values taken by the function $f_\varepsilon$ are smaller than the ones taken by the function $f_0$, but at the same time in a neighborhood of the point $\frac{3\pi}{2}$ the values taken by the function $f_\varepsilon$  are greater than the ones taken by the function $f_0$. We get an analogous result if we consider $\rho_{-\varepsilon}$, so that $\max_{\theta}f_\varepsilon(\theta) > \max_{\theta}f_0(\theta)=\bar{d}$ for any non-null $\varepsilon$ whose absolute value is small enough. 
\section{Localization of points that realize the natural pseudo-distance and finiteness of the set of optimal homeomorphisms}
\label{sec3}
In this section we will prove the main results of our paper. In Theorem \ref{th2} we summarize the conclusions obtained in the previous section and localize points at which, after a suitable alignment through an optimal homeomorphism, 
the pointwise distance between the two considered filtering functions takes a value equal to the natural pseudo-distance $\bar{d}$. This localization makes the computation of the natural pseudo-distance easier, as we will see in Example \ref{ex3}. Theorem \ref{th3} instead gives us conditions under which only a finite number of optimal homeomorphisms exists. 
Its usefulness does not reside only in the theoretical approach to the problem. From an applicative point of view it is convenient to know that there may be only a finite number of best matchings between two sets of data represented by real-valued functions. The proof of the theorem is based on the possibility of finding an infinite sequence of critical points for the function $F$, 
when the number of optimal homeomorphisms is infinite. This will lead to finding a degenerate point for the function $F$, and the study of the Hessian 
matrix at such a point will allow us to see that at least one of the functions taken into account must have a degenerate point, against the assumption that our functions are Morse.
\begin{theorem}
\label{th2}
Let $\varphi,\psi \in \mathcal{M}$ and $\bar{d}=d_{S^1}(\varphi,\psi)$. At least one of the following statements holds:
\begin{enumerate}
\item  There exist $\theta_1$ critical point for $\varphi$ and $\theta_2$ critical point for $\psi$ such that $\bar{d}= |\varphi(\theta_1)-\psi(\theta_2)|$; 

\item There exist  $\theta_1$, $\theta_2$, $\tilde{\theta}_1$ and $\tilde{\theta}_2$ such that $\bar{d}= |\varphi(\theta_1)-\psi(\theta_2)|=|\varphi(\tilde{\theta}_1)-\psi(\tilde{\theta}_2)|$ with
$$
\begin{cases}
\varphi'(\theta_1) =\psi'(\theta_2) \text{ and } \varphi'(\tilde{\theta}_1)=\psi'(\tilde{\theta}_2) \\
\theta_1 - \theta_2 = \tilde{\theta}_1 - \tilde{\theta}_2 \\
\varphi'(\theta_1)\varphi'(\tilde{\theta}_1) < 0 \quad (\text{or equivalently }\quad \psi'(\theta_2)\psi'(\tilde{\theta}_2) < 0)  
\end{cases} 
$$

$$
\text{if } \qquad (\varphi(\theta_1)-\psi(\theta_2))\cdot(\varphi(\tilde{\theta}_1)-\psi(\tilde{\theta}_2)) > 0 
$$
or
$$
\begin{cases}
\varphi'(\theta_1) =\psi'(\theta_2) \text{ and } \varphi'(\tilde{\theta}_1)=\psi'(\tilde{\theta}_2) \\
\theta_1 - \theta_2 = \tilde{\theta}_1 - \tilde{\theta}_2 \\
\varphi'(\theta_1)\varphi'(\tilde{\theta}_1) > 0 \quad (\text{or equivalently }\quad \psi'(\theta_2)\psi'(\tilde{\theta}_2) > 0)  
\end{cases}
$$
$$ 
\text{if } \qquad
 (\varphi(\theta_1)-\psi(\theta_2))\cdot(\varphi(\tilde{\theta}_1)-\psi(\tilde{\theta}_2)) < 0.
$$
\end{enumerate}
\end{theorem}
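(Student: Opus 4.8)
The plan is to organise the results of Section~\ref{sec2}, principally Proposition~\ref{p2}, into a single case analysis, after first settling the degenerate situation. By compactness of $S^1$ an optimal $\rho_{\bar\alpha}$ always exists; if $\bar d=0$ then $\max_\theta F(\theta,\bar\alpha)=0$ forces $\varphi(\theta)=\psi(\theta+\bar\alpha)$ for every $\theta$, hence also $\varphi'(\theta)=\psi'(\theta+\bar\alpha)$. Any critical point $\theta_1$ of $\varphi$ then pairs with the critical point $\theta_2:=\theta_1+\bar\alpha$ of $\psi$, and $|\varphi(\theta_1)-\psi(\theta_2)|=0=\bar d$, which is statement~(1). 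So from here on I assume $\bar d>0$ and fix an optimal $\rho_{\bar\alpha}$.

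The computational tool is the explicit form of the first partials of $F$ where it does not vanish. Writing $g(\theta,\alpha)=\varphi(\theta)-\psi(\theta+\alpha)$, so that $F=|g|$, I would record
\begin{align*}
\frac{\partial F}{\partial\theta}(\theta,\alpha) &= \mathrm{sgn}(g(\theta,\alpha))\,(\varphi'(\theta)-\psi'(\theta+\alpha)), \\
\frac{\partial F}{\partial\alpha}(\theta,\alpha) &= -\,\mathrm{sgn}(g(\theta,\alpha))\,\psi'(\theta+\alpha),
\end{align*}
valid on the neighbourhood of $\Theta:=\{\theta:f_{\bar\alpha}(\theta)=\bar d\}$ where $F$ is $C^1$. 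Since every $\theta\in\Theta$ maximises $f_{\bar\alpha}$, the $\theta$-partial automatically vanishes there, so $(\theta,\bar\alpha)$ is a critical point of $F$ precisely when $\partial F/\partial\alpha(\theta,\bar\alpha)=0$, i.e.\ when $\psi'(\theta+\bar\alpha)=0$.

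Now I split on whether some point of $\Theta$ is critical for $F$. If $(\theta,\bar\alpha)$ with $\theta\in\Theta$ is critical, then $\psi'(\theta+\bar\alpha)=0$, and combined with the automatic relation $\varphi'(\theta)=\psi'(\theta+\bar\alpha)$ this gives $\varphi'(\theta)=0$ too; setting $\theta_1=\theta$, $\theta_2=\theta+\bar\alpha$ yields statement~(1). Otherwise no point of $\Theta$ is critical, which is exactly the hypothesis of Proposition~\ref{p2}; it supplies $\bar\theta_1\neq\bar\theta_2$ in $\Theta$ with $\frac{\partial F}{\partial\alpha}(\bar\theta_1,\bar\alpha)\,\frac{\partial F}{\partial\alpha}(\bar\theta_2,\bar\alpha)<0$ and both $\alpha$-partials nonzero. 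I put $\theta_1=\bar\theta_1$, $\theta_2=\bar\theta_1+\bar\alpha$, $\tilde\theta_1=\bar\theta_2$, $\tilde\theta_2=\bar\theta_2+\bar\alpha$. Then $\bar d=|\varphi(\theta_1)-\psi(\theta_2)|=|\varphi(\tilde\theta_1)-\psi(\tilde\theta_2)|$ since both points lie in $\Theta$; the identity $\theta_1-\theta_2=\tilde\theta_1-\tilde\theta_2=-\bar\alpha$ is immediate; and vanishing of the $\theta$-partial at each point gives $\varphi'(\theta_1)=\psi'(\theta_2)$ and $\varphi'(\tilde\theta_1)=\psi'(\tilde\theta_2)$, while nonvanishing of the $\alpha$-partials makes all these derivatives nonzero.

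The remaining step, which I expect to be a matter of care rather than of genuine difficulty, is the sign bookkeeping producing the two alternatives. Substituting the $\partial F/\partial\alpha$ formula into the Proposition~\ref{p2} inequality and using $\psi'(\theta_2)=\varphi'(\theta_1)$, $\psi'(\tilde\theta_2)=\varphi'(\tilde\theta_1)$, the two sign prefactors combine to give
\[
\mathrm{sgn}(\varphi(\theta_1)-\psi(\theta_2))\;\mathrm{sgn}(\varphi(\tilde\theta_1)-\psi(\tilde\theta_2))\;\varphi'(\theta_1)\,\varphi'(\tilde\theta_1)<0.
\]
When $(\varphi(\theta_1)-\psi(\theta_2))(\varphi(\tilde\theta_1)-\psi(\tilde\theta_2))>0$ the product of signs is $+1$, forcing $\varphi'(\theta_1)\varphi'(\tilde\theta_1)<0$; when it is $<0$ the product of signs is $-1$, forcing $\varphi'(\theta_1)\varphi'(\tilde\theta_1)>0$. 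These are the two branches of statement~(2), and the parenthetical reformulations in terms of $\psi'$ follow at once from the matching-derivative identities. The one thing to keep honest throughout is that $\bar d>0$ guarantees $\mathrm{sgn}(g)\neq0$ on $\Theta$, so that every sign manipulation above is legitimate.
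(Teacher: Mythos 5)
Your proof is correct and follows essentially the same route as the paper: the paper's own proof is a two-line reduction to Theorem \ref{p1} and Proposition \ref{p2} when $\bar d \neq 0$ (plus the observation that $\bar d = 0$ forces $\varphi = \psi \circ \rho_{\bar\alpha}$, making the claim trivial), and your argument is exactly that reduction with the omitted details --- the explicit formulas for $\partial F/\partial\theta$ and $\partial F/\partial\alpha$, the translation of criticality into statement~(1), and the sign bookkeeping behind the two branches of statement~(2) --- written out. If anything, your dichotomy (some point of $\Theta$ critical for $F$ versus none) is slightly cleaner than the paper's literal citation, since Theorem \ref{p1} as stated only covers the case of a unique maximum of $f_{\bar\alpha}$, while your Case~A also handles several maxima one of which is critical; this is a faithful completion rather than a different proof.
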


\begin{proof}
If $\bar{d} \neq 0$ the result follows from Theorem \ref{p1} and Proposition \ref{p2}, by setting $\bar{\alpha}=\theta_1-\theta_2$. If $\bar{d} = 0$ there must exist a $\rho \in S^1$ such that $\varphi=\psi \circ \rho$ and therefore both the statements of the theorem are trivially true.
\end{proof}

Our last result concerns the finiteness of the set of optimal homeomorphisms. 

\begin{theorem}
\label{th3}
The number of optimal homeomorphisms between two functions $\varphi,\psi \in \mathcal{M}$ is finite.
\end{theorem}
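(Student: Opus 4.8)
The plan is to argue by contradiction, reducing the whole statement to the nondegeneracy of the critical points of $\varphi$ and $\psi$. First I would dispose of the case $\bar{d}=0$: here $\varphi=\psi\circ\rho_{\bar{\alpha}}$ for an optimal $\bar{\alpha}$, so the set of optimal parameters is the coset $\bar{\alpha}+\{\beta\in S^1:\psi\circ\rho_\beta=\psi\}$, a translate of the group of periods of $\psi$. Since a Morse function on $S^1$ is nonconstant, this is a proper closed subgroup of $S^1$, hence finite. From now on assume $\bar{d}>0$, so that $F$ is smooth near every point where $F=\bar{d}$, and write $v(\theta,\alpha)=\varphi(\theta)-\psi(\theta+\alpha)$, so that $F=|v|$.

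Suppose, for contradiction, that there are infinitely many optimal $\rho_{\bar{\alpha}}$. Their parameters form an infinite subset of the compact group $S^1$, and hence accumulate at some $\alpha^{*}$, which is again optimal because $\alpha\mapsto\max_\theta F(\theta,\alpha)$ is continuous and the optimal set is its fibre over $\bar{d}$. For each optimal $\bar{\alpha}$ I pick a global maximum point $\theta$ of $f_{\bar{\alpha}}$; then $F(\theta,\bar{\alpha})=\bar{d}$ and, $\theta$ being an interior maximum of a function that is smooth there, $\partial_\theta v(\theta,\bar{\alpha})=\varphi'(\theta)-\psi'(\theta+\bar{\alpha})=0$. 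After passing to a subsequence (in the spirit of Lemma \ref{lem1}) I may assume that $v$ has a constant sign at these points, say $v=+\bar{d}$, and that they converge to $(\theta^{*},\alpha^{*})$. In this way I obtain infinitely many \emph{distinct} solutions of the system $\Xi(\theta,\alpha):=\bigl(v(\theta,\alpha)-\bar{d},\ \varphi'(\theta)-\psi'(\theta+\alpha)\bigr)=(0,0)$ accumulating at $(\theta^{*},\alpha^{*})$.

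The core of the proof is that such an accumulation forces the Jacobian $D\Xi$ to be singular at the limit. Since $\varphi'(\theta^{*})-\psi'(\theta^{*}+\alpha^{*})=0$ there, a direct computation gives
\[
\det D\Xi(\theta^{*},\alpha^{*})=\psi'(\theta^{*}+\alpha^{*})\,\bigl(\varphi''(\theta^{*})-\psi''(\theta^{*}+\alpha^{*})\bigr)=0 .
\]
If the first factor vanishes, then also $\varphi'(\theta^{*})=0$, so $(\theta^{*},\alpha^{*})$ is a genuine critical point of $F$. The accumulation of distinct contacts at a \emph{nondegenerate} critical point is impossible, because near such a point the level set $\{v=\bar{d}\}$ and the smooth curve $\{\partial_\theta v=0\}$ meet in only finitely many points; hence $(\theta^{*},\alpha^{*})$ must be a \emph{degenerate} critical point of $F$. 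But there the Hessian of $F$ equals that of $v$, with determinant $-\varphi''(\theta^{*})\,\psi''(\theta^{*}+\alpha^{*})$, so degeneracy forces $\varphi''(\theta^{*})=0$ or $\psi''(\theta^{*}+\alpha^{*})=0$; as $\theta^{*}$ is a critical point of $\varphi$ and $\theta^{*}+\alpha^{*}$ a critical point of $\psi$, this produces a degenerate critical point of one of the two functions, contradicting the Morse hypothesis. This is exactly the Hessian mechanism anticipated at the start of the section.

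The step I expect to be the main obstacle is the remaining possibility, where only the second factor vanishes, i.e. $\varphi''(\theta^{*})=\psi''(\theta^{*}+\alpha^{*})$ with $\psi'(\theta^{*}+\alpha^{*})\neq0$: here the fibrewise maximum at the limit is degenerate, yet $(\theta^{*},\alpha^{*})$ is not a critical point of $F$, so the clean argument above does not apply. The difficulty is that the envelope $\Phi(\alpha)=\max_\theta v(\theta,\alpha)$ is only a maximum of smooth functions, hence not $C^2$ in general, and $f_{\alpha^{*}}$ may carry several global maxima, some degenerate. I would overcome this using precisely the tools of Section \ref{sec2}: Theorem \ref{p1} yields a critical point of $F$ whenever the maximum is attained at a single point, while Proposition \ref{p2} shows that two global maxima with $\partial_\alpha F$ of opposite sign make $\alpha^{*}$ a strict local minimum of the relevant envelope. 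Differentiating $\Phi$ one-sidedly at $\alpha^{*}$ (a max-function rule), the condition $\Phi=\bar{d}$ along the accumulating parameters forces these one-sided derivatives to vanish, hence $-\psi'$ to vanish at some global maximizer; together with the fact that a Morse $\varphi$ (resp.\ $\psi$) has only isolated critical points — so the curve $\{\varphi'=\psi'\}$ can share no non-diagonal arc with $\{v=\bar{d}\}$ — this should drive every admissible configuration back to the nondegenerate critical point already treated, yielding in each case a degenerate critical point of $\varphi$ or $\psi$ and completing the contradiction.
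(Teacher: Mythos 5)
Your overall architecture is sound in places: the $\bar{d}=0$ case via the stabilizer subgroup of $\psi$ is correct (and cleaner than the paper's argument), the reduction to the system $\Xi=(v-\bar{d},\,\partial_\theta v)=(0,0)$ is legitimate, the inverse-function-theorem conclusion that accumulation of distinct solutions forces $\det D\Xi(\theta^*,\alpha^*)=\psi'(\theta^*+\alpha^*)\bigl(\varphi''(\theta^*)-\psi''(\theta^*+\alpha^*)\bigr)=0$ is correct, and so is the Hessian determinant $-\varphi''\psi''$. But there are two genuine gaps. The first is in your Case A. There, $\varphi'(\theta^*)=\psi'(\theta^*+\alpha^*)=0$, so the Morse hypothesis gives $\varphi''(\theta^*)\neq 0\neq\psi''(\theta^*+\alpha^*)$, which means the critical point of $F$ is \emph{automatically nondegenerate}; your entire contradiction therefore rests on the claim that near a nondegenerate critical point the sets $\{v=\bar{d}\}$ and $\{\partial_\theta v=0\}$ meet in finitely many points. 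That claim is unjustified, and it is false in the smooth (non-analytic) category precisely in the sub-case $\varphi''(\theta^*)=\psi''(\theta^*+\alpha^*)$, which is fully compatible with Morse: then one separatrix of the saddle of $v$ is tangent to the fiber direction, and a flat oscillating perturbation (e.g.\ locally $\varphi(\theta)=\theta^2$, $\psi(t)=t^2-\bar{d}+e^{-1/t^2}\sin(1/t)$, a Morse germ with nondegenerate saddle for $v$) produces infinitely many solutions of your system accumulating at the critical point. The root of the problem is that your accumulating points are \emph{not} critical points of $F$ (only $\partial_\theta F$ vanishes there), so the standard fact ``nondegenerate critical points are isolated'' does not apply to them, and the two equations you kept are too weak; you must re-inject the optimality of the parameters $\alpha_i$ (i.e.\ $F\le\bar{d}$ on the \emph{whole} fibers), which your reduction discarded. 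This is exactly where the paper's proof differs: between consecutive optimal parameters $\alpha_i<\alpha_{i+1}$ it maximizes $g(\alpha)=\max_\theta F(\theta,\alpha)$ over $[\alpha_i,\alpha_{i+1}]$, and since $g\ge\bar{d}$ with equality at both endpoints, the resulting pair $(\tilde{\theta}_i,\tilde{\alpha}_i)$ is a genuine two-dimensional local maximum of $F$, hence an honest critical point; accumulation of honest critical points does force degeneracy of the limit, and the Hessian computation then closes the argument.

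The second gap is your Case B ($\psi'(\theta^*+\alpha^*)\neq 0$, $\varphi''(\theta^*)=\psi''(\theta^*+\alpha^*)$), which you yourself flag as ``the main obstacle'' and only sketch: the appeal to Proposition \ref{p2} is not available (its hypothesis that no point of $\Theta$ is critical for $F$ is not verified here), and the one-sided envelope differentiation is not carried out, so as written this case is simply not proved. Ironically, it is the easier case and can be closed by using optimality directly: since $\partial_\alpha v(\theta^*,\alpha^*)=-\psi'(\theta^*+\alpha^*)\neq 0$, the level set $\{v=\bar{d}\}$ is locally a graph $\alpha=\ell(\theta)$ and $v-\bar{d}$ has a fixed sign on each side of it; optimality of $\alpha_i$ forces the fiber $\{\alpha=\alpha_i\}$ to stay on one side of this graph while touching it at $\theta_i$, so every $\alpha_i$ (for large $i$) must equal the extremal value of $\ell$ on the fixed neighborhood --- one and the same number, contradicting that the $\alpha_i$ are distinct. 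As submitted, however, neither Case A nor Case B is established, so the proof is incomplete; the missing ingredient in both is the same, namely reintroducing the global fiberwise bound $F\le\bar{d}$ after your pointwise reduction, which is what the paper's intermediate-maximum construction accomplishes.
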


\begin{proof}
We will see that if there is an infinite number of optimal homeomorphisms then at least one of the two functions $\varphi,\psi$ has a degenerate point.\\
Suppose there is an infinite family of optimal homeomorphisms between $\varphi$ and $\psi$. For the compactness of the Lie group $S^1$ we can 
construct a sequence $(\rho_{\alpha_i})$ of optimal homeomorphisms different from each other, which converges to an optimal homeomorphism $\rho_{\bar{\alpha}}$. By possibly extracting a subsequence, we can assume either $\alpha_i > \bar{\alpha}$ for all $i$ 
or $\alpha_i < \bar{\alpha}$ for all $i$. We will confine ourselves to examine just the first case, since the other can be managed in a
completely analogous way. Once again for the compactness of $S^1$, we can find a sequence $(\theta_i)$ that converges to a point $\bar{\theta}$, such that
\begin{equation*}
f_{\alpha_i}(\theta_i)=\max_{\theta} f_{\alpha_i}(\theta)=\bar{d} \quad \text { and } \quad
f_{\bar{\alpha}}(\bar{\theta})=\max_{\theta} f_{\bar{\alpha}}(\theta)=\bar{d}.
\end{equation*}
The sequence $((\theta_i,\alpha_i))$ converges to $(\bar{\theta},\bar{\alpha})$.
We see that the following inequalities hold for all $i$:
\begin{gather}
F(\theta_i,\alpha_i)-F(\theta_i,\bar{\alpha}) \geq F(\theta_i,\alpha_i)-F(\bar{\theta},\bar{\alpha}) =\bar{d}-\bar{d}= 0 \\
F(\bar{\theta},\alpha_i)-F(\bar{\theta},\bar{\alpha}) \leq  F(\theta_i,\alpha_i)-F(\bar{\theta},\bar{\alpha}) = \bar{d}-\bar{d}=0.
\end{gather}
Now we will consider the case in which $\bar{d}>0$ and therefore the function $F$ is a $C^1$ function near the points that realize the natural pseudo-distance.
We have that
\begin{gather}\label{dis1}
\frac{\partial F}{\partial \alpha}(\bar{\theta},\bar{\alpha}) =
\lim_{i \rightarrow \infty} \frac{F(\theta_i,\alpha_i)-F(\theta_i,\bar{\alpha}) }{\alpha_i - \bar{\alpha}} \geq 0 \\ \label{dis2}
\frac{\partial F}{\partial \alpha}(\bar{\theta},\bar{\alpha}) =
\lim_{i \rightarrow \infty} \frac{F(\bar{\theta},\alpha_i)-F(\bar{\theta},\bar{\alpha}) }{\alpha_i - \bar{\alpha}} \leq 0 
\end{gather}
and hence $(\bar{\theta},\bar{\alpha})$ is a critical point for $F$.
Now we want to show there is an infinite sequence of critical points for $F$, proving that $F$ is not a Morse function and that a degenerate critical point exists.  \\
Let us consider the continuous function $g(\alpha)=\max_{\theta\in S^1}F(\theta,\alpha)$. We know that $g(\alpha_i)=\bar{d}$ 
for every index $i$, and that the function $g$  must attain its maximum value over every compact interval $[\alpha_i,\alpha_{i+1}]$. Let us call $\tilde{\alpha}_i$ the point at which the restriction of $g$ to the set  $[\alpha_i,\alpha_{i+1}]$ takes its maximum value. We can now consider the point $\tilde{\theta}_i$ of global maximum for the function $f_{\tilde{\alpha}_i}$. The pair $(\tilde{\theta}_i , \tilde{\alpha}_i )$ is a maximum point for the function $F$, and hence also a critical point, since $F$ is at least $C^1$ in a neighbourhood of that pair.

Since there exists a  point $(\bar{\theta},\bar{\alpha})$ that is an accumulation point of critical points for the function $F$, 
$(\bar{\theta},\bar{\alpha})$ is also a degenerate point for the function $F$. This means that the Hessian matrix at $(\bar{\theta},\bar{\alpha})$ has determinant equal to $0$. The Hessian matrix is

\begin{equation} 
\textrm{sgn}(\varphi(\bar{\theta})-\psi(\bar{\theta}+\bar{\alpha}))\cdot
\left(
\begin{array}{c c}
\varphi''(\bar{\theta})-\psi''(\bar{\theta}+\bar{\alpha}) & - \psi''(\bar{\theta}+\bar{\alpha})  \\
-\psi''(\bar{\theta}+\bar{\alpha})  & -\psi''(\bar{\theta}+\bar{\alpha})  \\
\end{array} \right) 
\end{equation}
(where we use the notation: $\varphi''(x)= \frac{d^2 \varphi}{dx^2}(x)$ and the same for $\psi$).
If $\psi''(\bar{\theta}+\bar{\alpha}) \neq 0$,  it must be $\varphi''(\bar{\theta})-\psi''(\bar{\theta}+\bar{\alpha}) = -\psi''(\bar{\theta}+\bar{\alpha})$, and hence $\varphi''(\bar{\theta})=0$.
Since $(\bar{\theta},\bar{\alpha})$ is a critical point for $F$ it must also be:
\begin{equation}
\frac{\partial F}{\partial \theta}(\bar{\theta},\bar{\alpha})= 0 \qquad \text{and} \qquad \frac{\partial F}{\partial \alpha}(\bar{\theta},\bar{\alpha})= 0.
\end{equation} 
The equality $\frac{\partial F}{\partial \alpha}(\bar{\theta},\bar{\alpha})= 0$ implies that $-\psi'(\bar{\theta}+\bar{\alpha})=0$. Since$\frac{\partial F}{\partial \theta}(\bar{\theta},\bar{\alpha})= 0$ it follows that $\varphi'(\bar{\theta})-\psi'(\bar{\theta},\bar{\alpha})=0$, and it must be $\varphi'(\bar{\theta})=0$. Then $\bar{\theta}$ is a degenerate point for the function $\varphi$, which cannot be a Morse function.\\
Analogously, in the case $\psi''(\bar{\theta}+\bar{\alpha})=0$, we have that $\psi'(\bar{\theta}+\bar{\alpha})=0$ and hence $\bar{\theta}+\bar{\alpha}$ is a degenerate point for the function $\psi$, which cannot be a Morse function. \\
We eventually consider the case in which $\bar{d}=0$. If $\varphi$, $\psi$ are Morse functions and $\bar{d}=0$, there cannot exist an infinite number of homeomorphisms $\rho_\alpha$ such that $\varphi = \psi \circ \rho_\alpha$, since these homeomorphisms must match the critical points of the two functions, and the number of these points is finite. 
\end{proof}

\begin{example}
\label{ex3}
The use of Theorems \ref{th2} and \ref{th3} can be clarified by the following example, illustrating the computation of the natural pseudo-distance between the two functions $\varphi(\theta)=\frac{1}{2}\sin(2\theta)$ and $\psi(\theta)=\sin(\theta)$. Since $\varphi$ and $\psi$ are Morse, Theorem \ref{th3} guarantees that the number of optimal homeomorphisms between $\varphi$ and $\psi$ is finite. We can apply Theorem \ref{th2} to find these homeomorphisms. Since the difference between the maximum values of the two functions is $\frac{1}{2}$, the inequality $\bar{d}\geq \frac{1}{2}$ holds. Let us look for the points described in the second statement of Theorem \ref{th2}. For any $\alpha \in S^1$ we want to find the corresponding $\theta$s that satisfy the equation $\varphi'(\theta)-\psi'(\theta+\alpha)$, i.e.
\begin{equation}
\cos(2\theta)=\cos(\theta+\alpha).
\end{equation}
We obtain the following solutions depending on $\alpha$:
\begin{equation*}
\bar{\theta}_1 = \alpha \quad \bar{\theta}_2 = -\frac{\alpha}{3} \quad \bar{\theta}_3 = -\frac{\alpha}{3} +\frac{2\pi}{3} \quad \bar{\theta}_4 = -\frac{\alpha}{3} +\frac{4\pi}{3}.
\end{equation*}
We insert these values in the equation $|\varphi(\theta_1)-\psi(\theta_2)| = |\varphi(\tilde{\theta}_1)-\psi(\tilde{\theta}_2)|$ (see Theorem \ref{th2}) to find the possible $\alpha$ associated with the optimal homeomorphisms. If $\rho_\alpha$ is an optimal homeomorphism, two indexes $i$ and $j$ must exist such that
\begin{equation*}
\bar{d}=
\left | \frac{1}{2}\sin(2\bar{\theta}_i)-\sin(\bar{\theta}_i+\alpha)\right | =\left | \frac{1}{2}\sin(2\bar{\theta}_j)-\sin(\bar{\theta}_j+\alpha)\right |.
\end{equation*} 
This leads to consider these possible values for $\alpha$: $0, \frac{\pi}{4}, \frac{\pi}{2}, \frac{3 \pi}{4}, \pi, \frac{5 \pi}{4}, \frac{3 \pi}{2}, \frac{7\pi}{4} $. With some calculations we see that 
\begin{gather}
\max_{\theta \in S^1} f_\alpha(\theta) = \frac{3 \sqrt{3}}{4} \qquad \text{ for } \alpha \in \left\{0, \frac{\pi}{2}, \pi, \frac{3\pi}{2} \right\} \\
\max_{\theta \in S^1} f_\alpha(\theta) = \frac{3}{2} \qquad \text{ for } \alpha \in \left\{ \frac{\pi}{4},  \frac{3 \pi}{4},  \frac{5 \pi}{4}, \frac{7\pi}{4}\right\}.
\end{gather}
Hence the natural pseudo-distance is $\frac{3 \sqrt{3}}{4}$ and the points $\theta_1=\theta_2=\frac{2\pi}{3}$ and $\tilde{\theta}_1=\tilde{\theta}_2=\frac{4\pi}{3}$ satisfy the conditions of the second statement of Theorem \ref{th2}.
\end{example}
\section{Acknowledgments}
The author thanks Patrizio Frosini for his helpful advice.

\providecommand{\bysame}{\leavevmode\hbox to3em{\hrulefill}\thinspace}
\providecommand{\MR}{\relax\ifhmode\unskip\space\fi MR }
\providecommand{\MRhref}[2]{%
  \href{http://www.ams.org/mathscinet-getitem?mr=#1}{#2}
}
\providecommand{\href}[2]{#2}
\bibliographystyle{plain}
\bibliography{art}
\end{document}